%
%
%


\documentclass[reqno]{mcom-l}

\usepackage{amssymb}



\usepackage{url}
\usepackage{enumerate}


\newtheorem{thm}{Theorem}
\newtheorem{prop}[thm]{Proposition}
\newtheorem{cor}[thm]{Corollary}
\newtheorem*{hyp}{Hypothesis P}


\newcounter{stepcounter}
\newcommand{\step}[1]{\refstepcounter{stepcounter} {\sl Step (\roman{stepcounter}) --- #1}}

\newcommand{\ZZ}{\mathbb{Z}}
\newcommand{\RR}{\mathbb{R}}
\newcommand{\NN}{\mathbb{N}}
\newcommand{\FF}{\mathbb{F}}
\newcommand{\Rring}{\mathcal{R}}
\newcommand{\Sring}{\mathcal{S}}
\newcommand{\ii}{\mathbf{i}}
\newcommand{\divides}{\mathrel|}
\newcommand{\eps}{\varepsilon}
\DeclareMathOperator{\Mint}{\mathsf{M}}
\DeclareMathOperator{\Tpoly}{\mathsf{T}}
\DeclareMathOperator{\Tshort}{\mathsf{T}_{\mathrm{short}}}
\DeclareMathOperator{\Cshort}{\mathsf{C}_{\mathrm{short}}}
\DeclareMathOperator{\Cbiv}{\mathsf{C}_{\mathrm{biv}}}
\DeclareMathOperator{\Ctiny}{\mathsf{C}_{\mathrm{tiny}}}
\renewcommand{\leq}{\leqslant}
\renewcommand{\geq}{\geqslant}

\begin{document}

\title[Faster integer multiplication]{Faster integer multiplication using \\ plain vanilla FFT primes}


\author{David Harvey}
\address{School of Mathematics and Statistics, University of New South Wales, Sydney NSW 2052, Australia}
\curraddr{}
\email{d.harvey@unsw.edu.au}
\thanks{Harvey was supported by the Australian Research Council (grants DP150101689 and FT160100219).}

\author{Joris van der Hoeven}
\address{Laboratoire d'informatique, UMR 7161 CNRS, \'Ecole polytechnique, 91128 Palaiseau, France}
\curraddr{}
\email{vdhoeven@lix.polytechnique.fr}
\thanks{}

\subjclass[2010]{Primary 68W30, 68W40, 11Y16}

\date{}

\dedicatory{}

\begin{abstract}
Assuming a conjectural upper bound for the least prime in an arithmetic progression, we show that $n$-bit integers may be multiplied in $O(n \log n\, 4^{\log^* n})$ bit operations.
\end{abstract}

\maketitle


\bibliographystyle{amsplain}

\section{Introduction}

Let $\Mint(n)$ be the number of bit operations required to multiply two $n$-bit integers in the deterministic multitape Turing model \cite{Pap-complexity}.
A decade ago, F\"urer \cite{Fur-faster1,Fur-faster2} proved that
\begin{equation}
\label{eq:furer}
 \Mint(n) = O(n \log n \, K^{\log^* n})
\end{equation}
for some constant $K > 1$.
Here $\log^* x$ denotes the iterated logarithm, that is,
 \[ \log^* x := \min\{ j \in \NN : \log^{\circ j} x \leq 1\} \qquad (x \in \RR), \]
where $\log^{\circ j} x := \log \cdots \log x$ (iterated~$j$ times).
Harvey, van der Hoeven and Lecerf~\cite{HvdHL-mul} subsequently gave a related algorithm that achieves \eqref{eq:furer} with the explicit value $K = 8$, and more recently Harvey announced that one may achieve $K = 6$ by applying new techniques for truncated integer multiplication \cite{Har-truncmul}.

There have been two proposals in the literature for algorithms that achieve the tighter bound
\begin{equation}
\label{eq:K=4}
 \Mint(n) = O(n \log n \, 4^{\log^* n})
\end{equation}
under plausible but unproved number-theoretic hypotheses.
First, Harvey, van der Hoeven and Lecerf gave such an algorithm \cite[\S9]{HvdHL-mul} that depends on a slight weakening of the Lenstra--Pomerance--Wagstaff conjecture on the density of Mersenne primes, that is, primes of the form $p = 2^m - 1$, where~$m$ is itself prime.
Although this conjecture is backed by reasonable heuristics and some numerical evidence, it is problematic for several reasons.
At the time of writing, only 49 Mersenne primes are known, the largest being $2^{74{,}207{,}281}-1$ \cite{M74207281}.
More significantly, it has not been established that there are infinitely many Mersenne primes.
Such a statement seems to be well out of reach of contemporary number-theoretic methods.

A second conditional proof of \eqref{eq:K=4} was given by Covanov and Thom\'e \cite{CT-zmult}, this time assuming a conjecture on the density of certain generalised Fermat primes, namely, primes of the form $r^{2^\lambda} + 1$.
Again, although their unproved hypothesis is supported by heuristics and some numerical evidence, it is still unknown whether there are infinitely many primes of the desired form.
It is a famous unsolved problem even to prove that there are infinitely many primes of the form $n^2 + 1$, of which the above generalised Fermat primes are a special case.

As an aside, we mention that the unproved hypotheses in \cite[\S9]{HvdHL-mul} and \cite{CT-zmult} may both be expressed as statements about the cyclotomic polynomials $\phi_k(x)$ occasionally taking prime values: for \cite[\S9]{HvdHL-mul} we have $2^m - 1 = \phi_m(2)$, and for \cite{CT-zmult} we have $r^{2^\lambda} + 1 = \phi_{2^{\lambda+1}}(r)$.

In this paper we give a new conditional proof of \eqref{eq:K=4}, which depends on the following hypothesis.
Let $\varphi(q)$ denote the totient function.
For relatively prime positive integers~$r$ and~$q$, let $P(r, q)$ denote the least prime in the arithmetic progression $n = r \pmod q$, and put $P(q) := \max_r P(r, q)$.
\begin{hyp}
We have $P(q) = O(\varphi(q) \log^2 q)$ as $q \to \infty$.
\end{hyp}
Our main result is as follows.
\begin{thm}
\label{thm:main}
Assume Hypothesis P.
Then there is an algorithm achieving \eqref{eq:K=4}.
\end{thm}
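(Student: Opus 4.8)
\emph{Outline of the argument.}
The plan is to run the F\"urer--Schönhage--Strassen recursion in its modern form: reduce the multiplication of two $n$-bit integers to a bounded number of multiplications of integers of size $(\log n)^{O(1)}$, together with $O(n\log n)$ bit operations of ``bookkeeping'' (additions, modular reductions, data rearrangement, carry propagation, Kronecker packing/unpacking), and then iterate. Concretely, one aims for a relation of the shape
\[
 \Mint(n)\;\leq\;4\,\frac{n\log n}{\ell\log\ell}\,\Mint(\ell)\;+\;O(n\log n),\qquad \ell=(\log n)^{O(1)}.
\]
Writing $T(n):=\Mint(n)/(n\log n)$ this becomes $T(n)\leq 4\,T(\ell)+O(1)$, which unrolls over the $\log^* n+O(1)$ steps needed to drive $\ell$ down to a constant and delivers $\Mint(n)=O(n\log n\,4^{\log^* n})$, i.e.\ \eqref{eq:K=4}; one also has to check that the accumulated additive $O(n\log n)$ terms sum to $O(n\log n\,4^{\log^* n})$, which holds because each level's bookkeeping is at most a constant times that of the level below. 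So the entire content is to establish the displayed inequality \emph{with the constant $4$}.

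The coefficient ring for the main transform will be $R=\FF_p[y]/(y^r+1)$, where $r$ is a power of two of size $(\log n)^{\Theta(1)}$ and $p$ is a prime with $p\equiv 1\pmod{2^\kappa}$ for a suitable $\kappa\approx\log n$. This ring carries two families of roots of unity: the element $y$ is a \emph{cheap} root of unity of order $2r$, since multiplying by a power of $y$ is only a signed cyclic shift of the $r$ coefficients and costs $O(r\log p)$ bit operations; and, because $2^\kappa\mid p-1$, the subfield $\FF_p$ supplies an \emph{expensive} root of unity of order $2^\kappa$, a generic element of $\FF_p$ whose use costs a full $\FF_p$-multiplication. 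This is exactly F\"urer's configuration, with $\FF_p$ in place of the field $\mathbb C$ used in the unconditional algorithms --- and the point of working over a finite field is that arithmetic is \emph{exact}, removing the precision-management overhead that accounts for one of the factors of two separating $4$ from $8$. This is where Hypothesis~P enters: we need an ``FFT prime'' $p\equiv 1\pmod{2^\kappa}$ that is as small as possible, and applying Hypothesis~P with $q=2^\kappa$ (note $\gcd(1,2^\kappa)=1$) yields $p=P(1,2^\kappa)\leq P(2^\kappa)=O(2^\kappa\kappa^2)$, so $\log p=\kappa+O(\log\kappa)$. The slack is only $O(\log\log n)$ bits, it perturbs the size of the recursive subproblems by a factor $1+o(1)$, and such a $p$ is found deterministically in $(\log n)^{O(1)}$ time by primality-testing the $O(\kappa^2)$ candidates $1+j2^\kappa$. (If a single prime leaves too little room to recover the coefficients of the product without overflow, one uses $t=O(1)$ such primes and recombines by the Chinese remainder theorem.)

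With $R$ fixed, the core step is: encode each input, by a two-level splitting, as a polynomial of degree $<N$ over $R$ with $N$ a power of two dividing $p-1$ and $N\cdot(\text{chunk size})\approx n$, so that the integer product is read off from a product in $R[x]/(x^N-1)$; compute that product by a length-$N$ DFT over $R$ performed in two scales --- many cheap length-$2r$ DFTs using $y$ (shifts and additions only), interleaved with the expensive twiddle multiplications in $R$ and the outer DFTs of length $N/2r$ driven by the root from $\FF_p$ --- followed by an analogous inverse transform. Each expensive $R$-multiplication is a product of degree-$<r$ polynomials over $\FF_p$, hence by Kronecker substitution reduces to a multiplication of integers of size $O(r\log p)=(\log n)^{\Theta(1)}$: these are the recursive subproblems. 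The number of expensive multiplications is $O(N\log N/\log r)$, and one verifies that the cost of all cheap layers, modular reductions, and Kronecker conversions is $O(n\log n)$. To keep the overall constant at $4$ rather than something larger, the two input transforms and the inverse transform are combined economically, and the truncated/``short'' product variants ($\Tshort$, $\Cshort$, $\Cbiv$, $\Ctiny$, in the spirit of \cite{Har-truncmul}) are used to avoid the customary factor-of-two losses in the polynomial and bivariate multiplications.

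The hard part is precisely this constant-tracking. A careless implementation of the above loses factors of two in several places --- in the number of transforms one computes, in the CRT overhead over several primes, in the growth of coefficients under the two nested convolutions (which fixes the chunk size and hence $N$), and in the reductions among $\Mint$, $\Tpoly$, $\Tshort$, $\Cbiv$, $\Cshort$ and $\Ctiny$ --- and each such loss would push $4$ up to $6$, $8$, or worse. So the bulk of the work is to choose the parameters $\kappa$, $r$, $N$, $t$ and the chunk sizes so as to make all of these reductions tight, to prove the short-product estimates that shave the remaining factors of two, and to check uniformity: Hypothesis~P must be invoked with a single implied constant at every recursion level, and one must confirm that the $O(\log\log n)$-bit slack in each FFT prime, compounded across the $\log^* n$ levels, still contributes only a $1+o(1)$ factor to the final bound.
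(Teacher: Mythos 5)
Your outline takes a genuinely different route from the paper, and it contains a gap that happens to be the entire content of the theorem. You correctly identify the target recurrence $\Mint(n)\leq 4\frac{n\log n}{\ell\log\ell}\Mint(\ell)+O(n\log n)$ and the role of Hypothesis~P in supplying a small FFT prime, but you then write that ``the entire content is to establish the displayed inequality with the constant $4$'' and that ``the hard part is precisely this constant-tracking'' --- and you do not do it. Worse, the architecture you propose makes the constant $4$ doubtful on its face. You adopt F\"urer's configuration $R=\FF_p[y]/(y^r+1)$ with cheap roots $y$ and expensive twiddle factors, reducing each expensive $R$-multiplication to an integer multiplication via Kronecker substitution and recursing on $\Mint$ directly. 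Kronecker substitution on degree-$<r$ polynomials over $\FF_p$ produces integers of roughly $2r(2\log p+\log r)$ bits, i.e.\ about four times the information content $r\log p$ of an element of $R$; combined with the separate forward and inverse transforms this is exactly the kind of compounding that pins F\"urer-style recursions at $K=8$ or $16$, not $4$. (Covanov--Thom\'e reach $4$ only by using generalised Fermat primes, where $\FF_p$ itself contains the cheap root and the ring extension $\FF_p[y]/(y^r+1)$ disappears.) Your proposed rescues --- CRT over $t=O(1)$ primes, and ``short product'' techniques --- do not close this: the former multiplies the recursive work by $t$, and the latter is a misreading; the quantities $\Tshort,\Cshort,\Cbiv,\Ctiny$ in this paper are not truncated-product costs, and the truncated-multiplication machinery of \cite{Har-truncmul} yields $K=6$ unconditionally, not a further halving here.

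The paper's algorithm is quite different and explicitly makes \emph{no} use of fast roots of unity. It recurses on a normalised DFT cost $\Tpoly(m)$ rather than on $\Mint$, and the reduction is: Cooley--Tukey to split a long DFT over $\FF_p$ ($p=a\cdot 2^m+1$, $a=O(m^2)$ by Hypothesis~P) into short DFTs of length $S=2^{(\lg m)^2}$; Bluestein to turn each short DFT into a product in $\FF_p[X]/(X^S-1)$; splitting each $\FF_p$-coefficient into $k$ chunks of $r=m/k$ bits to land in $\ZZ[X,Y]/(X^S-1,\,Y^k+a)$ (this is the key new step --- the modulus $Y^k+a$ exactly matches $2^{-kr}\equiv -a \pmod p$, so no Crandall--Fagin machinery is needed); reducing modulo a second, exponentially smaller FFT prime $p'=a'\cdot 2^{m'}+1$ with $m'\sim 2r$; and finally doing length-$S$ DFTs over $\FF_{p'}$ recursively. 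The constant $4$ emerges transparently as $2\times 2$: one factor $2$ because the product coefficients need $\sim 2r$ bits so $m'\sim 2r$ and hence $km'\sim 2m$, and one factor $2$ from performing both forward and inverse transforms ($2L/S$ per short convolution). If you want to salvage your route, you must either exhibit a parameter choice under which the Kronecker/twiddle accounting really closes at $4$ (which would be a new result), or switch to a reduction in which the recursive subproblem size is only twice the coefficient size, as the paper does.
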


The overall structure of the new algorithm is largely inherited from \cite[\S9]{HvdHL-mul}.
In particular, we retain the strategy of reducing a ``long'' DFT (discrete Fourier transform) to many ``short'' DFTs via the Cooley--Tukey method \cite{CT-fft}, and then converting these back to convolution problems via Bluestein's trick \cite{Blu-dft}.
The main difference between the new algorithm and \cite[\S9]{HvdHL-mul} is the choice of coefficient ring.
The algorithm of \cite[\S9]{HvdHL-mul} worked over $\FF_p[\ii]$, where $p = 2^m - 1$ is a Mersenne prime and $\ii^2 = -1$.
In the new algorithm we use instead the ring $\FF_p$, where~$p$ is a prime of the form $p = a \cdot 2^m + 1$, for an appropriate choice of $m$ and $a = O(m^2)$.
Hypothesis~P guarantees that such primes exist (take $q = 2^m$ and $r = 1$).
The key new ingredient is the observation that we may convert an integer product modulo $a \cdot 2^m + 1$ to a polynomial product modulo $X^k + a$, by splitting the integers into chunks of $m/k$ bits.

In software implementations of fast Fourier transforms (FFTs) over finite fields, such as Shoup's NTL library \cite{ntl-9.9.1}, it is quite common to work over $\FF_p$ where $p$ is a prime of the form $a \cdot 2^m + 1$ that fits into a single machine register.
Such primes are sometimes called \emph{FFT primes}; they are popular because it is possible to perform a radix-two FFT efficiently over $\FF_p$ with a large power-of-two transform length.
Our Theorem \ref{thm:main} shows that such primes remain useful even in a theoretical sense as $m \to \infty$.

From a technical point of view, the new algorithm is considerably simpler than that of \cite[\S9]{HvdHL-mul}.
The main reason for this is that we have complete freedom in our choice of~$m$ (the coefficient size), and in particular we may easily choose $m$ to be divisible by the desired chunk size.
By contrast, the choice of $m$ in \cite[\S9]{HvdHL-mul} is dictated by the rather erratic distribution of Mersenne primes; this forces one to deal with the technical complication of splitting integers into chunks of `non-integral size', which was handled in \cite[\S9]{HvdHL-mul} by adapting an idea of Crandall and Fagin \cite{CF-DWT}.

We remark that the conditional algorithm of Covanov and Thom\'e \cite{CT-zmult} achieves $K = 4$ by a rather different route.
Instead of using Bluestein's trick to handle the short transforms, the authors follow F\"urer's original strategy, which involves constructing a coefficient ring containing ``fast'' roots of unity.
The algorithm of this paper, like the algorithms of \cite{HvdHL-mul}, makes no use of such ``fast'' roots.

Let us briefly discuss the evidence in favour of Hypothesis P.
The best unconditional bound for $P(q)$ is currently Xylouris's refinement of Linnik's theorem, namely $P(q) = O(q^{5.18})$ \cite{Xyl-linnik}.
If $q$ is a prime power (the case of interest in this paper), one can obtain $P(q) = O(q^{2.4+\eps})$ \cite[Cor.~11]{Cha-charsums}.
Assuming the Generalised Riemann Hypothesis (GRH), one has $P(q) = O(q^{2+\eps})$ \cite{HB-linnik}.
All of these bounds are far too weak for our purposes.

The tighter bound in Hypothesis P was suggested by Heath--Brown \cite{HB-almost,HB-siegel}.
It can be derived from the reasonable assumption that a randomly chosen integer in a given congruence class should be no more or less `likely' to be prime than a random integer of the same size, after correcting the probabilities to take into account the divisors of $q$.
A detailed discussion of this argument is given by Wagstaff \cite{Wag-least}, who also presents some supporting numerical evidence.
In the other direction, Granville and Pomerance \cite{GP-least} have conjectured that $\varphi(q) \log^2 q = O(P(q))$.
These questions have been revisited in a recent preprint of Li, Pratt and Shakan~\cite{LPS-leastprime}; they give further numerical data, and propose the more precise conjecture that
 \[ \liminf_q \frac{P(q)}{\varphi(q) \log^2 q} = 1, \qquad \limsup_q \frac{P(q)}{\varphi(q) \log^2 q} = 2. \]
The consensus thus seems to be that $\varphi(q) \log^2 q$ is the right order of magnitude for $P(q)$, although a proof is apparently still elusive.

For the purposes of this paper, there are several reasons that Hypothesis P is much more compelling than the conjectures required by \cite[\S9]{HvdHL-mul} and \cite{CT-zmult}.
First, it is well known that there are infinitely many primes in any given congruence class, and we even know that asymptotically the primes are equidistributed among the congruence classes modulo $q$.
Second, one finds that, in practice, primes of the required type are extremely common.
For example, we find that $a \cdot 2^{1000} + 1$ is prime for
 \[ a = 13, 306, 726, 2647, 3432, 5682, 5800, 5916, 6532, 7737, 8418, 8913, 9072, \ldots \]
and there are still plenty of opportunities to hit primes before exhausting the possible values of $a$ up to about $10^6$ allowed by Hypothesis P.

Third, we point out that Hypothesis P is actually much \emph{stronger} than what is needed in this paper.
We could prove Theorem \ref{thm:main} assuming only the weaker statement that there exists a logarithmically slow function $\Phi(q)$ (see \cite[\S5]{HvdHL-mul}) such that $P(q) < \varphi(q) \Phi(q)$ for all large $q$.
For example, we could replace $(\log q)^2$ in Hypothesis~P by $(\log q)^C$ for any fixed $C > 2$, or even by $(\log q)^{(\log \log q)^C}$ for any fixed $C > 0$.
To keep the complexity arguments in this paper as simple as possible, we will only give the proof of Theorem \ref{thm:main} for the simplest form of Hypothesis P, as stated above.

It is interesting to ask for what bit size the new algorithm would be
faster than the asymptotically inferior multiplication algorithms that
are used in practice.  The current reference implementation in
the GMP library~\cite{gmp} uses Sch\"onhage--Strassen's
algorithm~\cite{SS-multiply}. On recent architectures,
Pollard's algorithm~\cite{Pol71} tends to be very competitive
as well~\cite{Har14}.  Concerning our new algorithm,
we stress that the present paper is optimised for
simplicity rather than speed.  An optimised version would
essentially coincide with Pollard's algorithm for sizes $n<2^{64}$,
where $64$ corresponds to the bit size of an integer hardware register,
so the main recursive step in our algorithm would
only be invoked for super-astronomical sizes.
This does not withstand that some of the techniques that
we developed for proving the new complexity bound
may have practical applications for much smaller sizes.
This is exactly what happened for the related problem of
carryless integer multiplication: new techniques
from~\cite{vdH:ffmul} led to faster practical
implementations~\cite{vdH:f2kmul,vdH:ff2mul}.

\section{The algorithm}

Define $\lg x := \lceil \log_2 x \rceil$ for $x \geq 1$.
\emph{For the rest of the paper we assume that Hypothesis~P holds}, and hence we may fix an absolute constant $C > 0$ such that
 \[ P(q) \leq C q  (\lg q)^2 \]
for all $q \geq 2$.
(Numerical evidence suggests that $P(q)/(q(\lg q)^2)$ achieves its maximum value at $q = 2$, implying that one may take $C = 3/2$.)

An \emph{admissible size} is an integer $m > 2^{17}$ of the form
 \[ m = k  (\lg k)^3 \]
for some integer $k$.
For such $m$, let $p_0(m)$ denote the smallest prime of the form
\[ p = a \cdot 2^m + 1. \]
Hypothesis P implies that
\begin{equation}
\label{eq:a-bound}
 1 \leq a < C m^2.
\end{equation}

In the proof of Proposition \ref{prop:main-bound} below, we will describe a recursive algorithm $\textsc{Transform}$ that takes as input an admissible size $m = k(\lg k)^3$, the corresponding prime $p = p_0(m) = a \cdot 2^m + 1$, a power-of-two transform length $L = 2^{\lg L}$ such that
\begin{equation}
\label{eq:lgL-interval}
 (\lg m)^4 < \lg L < m,
\end{equation}
a primitive $L$-th root of unity $\zeta \in \FF_p$ (such a primitive root exists as $\lg L < m$ and $2^m \divides p - 1$), and a polynomial $F \in \FF_p[X]/(X^L - 1)$.
Its output is the DFT of $F$ with respect to $\zeta$, that is, the vector
 \[ \hat F := (F(1), F(\zeta), \ldots, F(\zeta^{L-1})) \in (\FF_p)^L. \]
For sufficiently large $m$, Proposition \ref{prop:main-bound} will reduce the computation
of such a DFT to the computation of a large collection of similar DFTs
over $\FF_{p'}$ for an exponentially smaller prime
\[ p' = p_0(m') = a' \cdot 2^{m'} + 1 \]
where
\[ m' = k' (\lg k')^3 \sim 2 (\lg m)^3. \]
The main reduction consists of five steps,
which may be summarised as follows:
   \begin{enumerate}[(i)]
      \item Reduce the given `long' transform of length~$L<2^m$ over $\FF_p$
               to many `short' transforms of exponentially smaller
               length $S=2^{(\lg m)^2}$ over $\FF_p$, via the Cooley--Tukey decomposition.
      \item Reduce each short transform to a product in $\FF_p[X]/(X^S - 1)$,
               i.e., a cyclic convolution of length $S$, using Bluestein's algorithm.
      \item By splitting each coefficient in $\FF_p$ into exponentially smaller chunks
               of bit size $r=(\lg k)^3 \sim (\lg m)^3$, reduce each product from step (ii)
               to a product in $\ZZ[X,Y]/(X^S - 1, Y^k + a)$.
      \item Embed each product from step (iii) into $\FF_{p'}[X,Y]/(X^S - 1, Y^k + a)$,
               for a suitable prime $p' = a' \cdot 2^{m'} + 1$
               that is exponentially smaller than~$p$;
               more precisely, $m' \sim 2 (\lg m)^3 \sim 2 r$ .
      \item Reduce each product from step (iv) to a collection of
               forward and inverse DFTs of length~$S$ over $\FF_{p'}$, and recurse.
   \end{enumerate}

\subsection{The main recursion}

Let us now present the main reduction in more detail together
with its complexity analysis.  We denote the running time of
$\textsc{Transform}$ by $\Tpoly(m, L)$. For $m > 2^{17}$
there is always at least one integer $\lg L$ in the interval
\eqref{eq:lgL-interval}, so we may define the normalisation
 \[ \Tpoly(m) := \max_{(\lg m)^4 < \lg L < m} \; \frac{\Tpoly(m, L)}{m L \lg L}. \]

In our algorithm we must often perform auxiliary arithmetic operations on `small' integers.
These will always be handled via the Sch\"onhage--Strassen algorithm
\cite{SS-multiply} and Newton's method \cite[Ch.~9]{vzGG-compalg3};
thus we may compute products, quotients and remainders of $n$-bit integers
in $O(n \lg n \lg \lg n)$ bit operations.

\begin{prop}
\label{prop:main-bound}
There exist absolute constants $m_0 > 2^{17}$, $C_1 > 0$ and $C_2 > 0$ with the following property.
For any admissible $m > m_0$, there exists an admissible $m' < (\log m)^4$ such that
\begin{equation}
\label{eq:main-bound}
 \Tpoly(m) < \left(4 + \frac{C_1}{\lg \lg m}\right) \Tpoly(m') + C_2.
\end{equation}
\end{prop}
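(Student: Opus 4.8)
The plan is to realise the recursive step of $\textsc{Transform}$ as the five-step reduction sketched above, and to arrange the cost accounting so that only the recursive short transforms of step~(v) feed into the leading constant~$4$. I would first fix the auxiliary parameters: put $r := (\lg k)^3$ (so $m = kr$), $\lg S := (\lg m)^2$, and $d := \lfloor \lg L/\lg S \rfloor$, so $\lg L = d\lg S + s$ with $0 \le s < \lg S$; the leftover $s$ bits are merged with one block, so that the short transforms have lengths $S_1$ which are powers of two in $[S, S^2)$. After splitting each $\FF_p$-coefficient into $r$-bit chunks and reducing modulo $Y^k + a$ (where $1 \le a < Cm^2$ by \eqref{eq:a-bound}), the product attached to a short transform of length $S_1 \le S^2$ has integer coefficients of absolute value at most $2^{2r + O((\lg m)^2)}$, coming from at most $S_1 k$ products of $r$-bit numbers followed by multiplication by $a$. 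I would therefore take $m'$ to be the least admissible size exceeding $2r + O((\lg m)^2) + 2$, so that $m' = 2r + O((\lg m)^2) \sim 2(\lg m)^3$; then $m' < (\log m)^4$ and $km' = 2m(1 + O(1/\lg m))$ once $m$ is large, and I set $p' := p_0(m')$. One then checks the preconditions of $\textsc{Transform}$ for the recursive calls with parameters $(m', p', S_1, \zeta', \cdot)$: $m'$ is admissible and exceeds $2^{17}$; $2^{m'} \mid p' - 1$, so a primitive $S_1$-th root $\zeta' \in \FF_{p'}$ exists (as $\lg S_1 \le 2(\lg m)^2 \le m'$); and $(\lg m')^4 = O((\lg\lg m)^4) < (\lg m)^2 \le \lg S_1 < m'$, so \eqref{eq:lgL-interval} holds for the pair $(m', S_1)$.

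Next I would carry out and cost the five steps. Step~(i): the Cooley--Tukey decomposition presents the length-$L$ transform as $O(d)$ layers, each consisting of short transforms over $\FF_p$ of total length~$L$, at most $L$ twiddle multiplications in $\FF_p$, and one transposition of an $S_1 \times (L/S_1)$ array of $m$-bit words. Step~(ii): Bluestein's trick converts each short transform of length $S_1$ into one cyclic convolution of length $S_1$ over $\FF_p$ --- no padding is needed, since the chirp has period $S_1$, a power of two --- plus $O(S_1)$ multiplications in $\FF_p$; the chirp depends only on the layer, so it is prepared once per layer. Step~(iii): the key conversion from the introduction (using $m = kr$) rewrites each such convolution as a product in $\ZZ[X, Y]/(X^{S_1} - 1,\, Y^k + a)$ with $r$-bit integer coefficients, at the cost of $O(S_1)$ further operations on $O(m)$-bit integers. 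Step~(iv): since the true result coefficients are bounded by $2^{m'-1} < p'/2$, one may compute this product in $\FF_{p'}[X, Y]/(X^{S_1} - 1,\, Y^k + a)$ and lift it back exactly; the reductions modulo $p'$ are negligible. Step~(v): working coefficient-wise in $Y$, this product reduces to $2k$ DFTs of length $S_1$ over $\FF_{p'}$ --- a forward pass and an inverse pass, $k$ slots each, with the DFT of the chirp's image precomputed once per layer --- together with $S_1$ pointwise products in $\FF_{p'}[Y]/(Y^k + a)$, each handled by Kronecker substitution in $O(km'\lg(km')\lg\lg(km'))$ bit operations.

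Using $\Tpoly(m', S_1) \le \Tpoly(m')\,m' S_1 \lg S_1$ and summing step~(v) over all short transforms, one gets at most $2k m'\Tpoly(m')$ times the sum of $S_1\lg S_1$; within each layer the $L/S_1$ short transforms share one length, so this sum equals $L$ times the sum of $\lg S_1$ over the $O(d)$ layers, namely $L\lg L$. Hence step~(v) contributes at most $2km' L\lg L\,\Tpoly(m') = 4mL\lg L\,\Tpoly(m')(1 + O(1/\lg m))$, i.e. $(4 + O(1/\lg m))\Tpoly(m')$ after dividing by $mL\lg L$. Everything else is lower order after the same division: the twiddle multiplications, the Bluestein pre- and post-multiplications, the chunk splitting and recombination, and the $S_1$ Kronecker products together total $O(dLm\lg m\lg\lg m)$, hence $O(\lg\lg m/\lg m)$ after normalisation since $d \le \lg L/(\lg m)^2$; the $O(d)$ transpositions total $O(dLm(\lg m)^2) = O(Lm\lg L)$, hence $O(1)$ after normalisation --- here the lower bound $\lg L > (\lg m)^4$ of \eqref{eq:lgL-interval} is essential, forcing $\lg S_1 \le \frac{1}{2}\lg L$ so that each transposition costs only $O(Lm\lg S_1)$; and the once-per-layer preparation of the chirp and its transform (and of $\zeta'$) costs $O(dkm'S\lg S\,\Tpoly(m')) = O((S/L)\Tpoly(m'))$, negligible since $S/L < 2^{(\lg m)^2 - (\lg m)^4}$. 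Collecting the $O(1/\lg m)\cdot\Tpoly(m')$ contributions into $(C_1/\lg\lg m)\Tpoly(m')$ and the $O(1)$ contributions into $C_2$ then yields \eqref{eq:main-bound} for suitable absolute constants $m_0 > 2^{17}$, $C_1$, $C_2$.

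The step I expect to be the main obstacle is this final accounting: one must organise the decomposition so that \emph{only} the step-(v) transforms contribute to the leading term, and confirm that the constant there is exactly~$4$ --- the two factors of~$2$ being the doubling of coefficient size when multiplying $r$-bit chunks (which forces $m' \sim 2r$) and the forward/inverse DFT pair of a Bluestein convolution whose chirp transform is precomputed once per layer. The divisibility $m = kr$, available because the admissible size $m$ may be chosen freely (unlike in \cite[\S9]{HvdHL-mul}), together with the window \eqref{eq:lgL-interval} for $\lg L$ and the per-layer reuse of the chirp, are exactly what force the remaining costs --- above all the Cooley--Tukey transpositions --- to be of lower order.
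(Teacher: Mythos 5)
Your proposal follows the same five-step reduction and cost accounting as the paper's proof, and the analysis is sound; the leading constant $4$ is obtained exactly as in the paper, from the factor $km'/m \sim 2$ (coefficient doubling forces $m' \sim 2r$) times the factor $\sim 2$ from the forward/inverse transform pair in step~(v). The minor deviations — merging the leftover Cooley--Tukey bits into one layer of length $S_1 \in [S, S^2)$ rather than peeling off $d'$ layers of butterflies; choosing $m'$ as the least admissible size past $2r + O((\lg m)^2)$ rather than past $\beta := 2(\lg m)^3$; and using Kronecker substitution for the pointwise products in $\FF_{p'}[Y]/(Y^k+a)$ instead of the polynomial Sch\"onhage--Strassen algorithm — are all cosmetic and do not change the argument.
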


\begin{proof}
Assume that we are given as input an admissible size $m = k(\lg k)^3$, the corresponding prime $p = p_0(m) = a \cdot 2^m + 1$, a transform length $L = 2^{\lg L}$ satisfying~\eqref{eq:lgL-interval}, a primitive $L$-th root of unity $\zeta \in \FF_p$, and a polynomial $F \in \FF_p[X]/(X^L - 1)$; our goal is to compute~$\hat F$.
For the base case $m \leq m_0$, we may compute~$\hat F$ using any convenient algorithm.
In what follows, we assume that $m > m_0$ and that $m_0$ is increased whenever necessary to accommodate statements that hold only for large~$m$.

Let us now detail the reductions (i)--(v) mentioned above.

\medskip
\step{reduce to short DFTs.}
In this step we reduce the given transform of length $L$ to a collection of short transforms of length
 \[ S := 2^{(\lg m)^2}. \]
By \eqref{eq:lgL-interval} we have $\lg L > (\lg m)^2$, so $S \divides L$.
Let $\omega := \zeta^{L/S}$; then $\omega$ is a primitive $S$-th root of unity in $\FF_p$.

Let $d := \lfloor \lg L / (\lg m)^2 \rfloor$, so that $\lg L = (\lg m)^2 d + d'$ where $0 \leq d' < (\lg m)^2$.
Applying the Cooley--Tukey method \cite{CT-fft} to the factorisation $L = S^d 2^{d'}$, the given transform of length $L$ may be decomposed into~$d$ layers, each consisting of $L/S$ transforms of length~$S$ (with respect to $\omega$), followed by $d'$ layers, each consisting of $L/2$ transforms of length~$2$.
Between each of these layers, we must perform $O(L)$ multiplications by `twiddle factors' in $\FF_p$, which are given by certain powers of $\zeta$.
(For further details of the Cooley--Tukey decomposition, see for example \cite[\S2.3]{HvdHL-mul}.)

Let us estimate the total cost of the twiddle factor multiplications.
We have $d \leq \lg L / (\lg m)^2$, and by \eqref{eq:lgL-interval} also $d' < (\lg m)^2 < \lg L / (\lg m)^2$, so the total number of twiddle factor multiplications is $O((d + d') L) = O(L \lg L / (\lg m)^2)$.
Using the Sch\"onhage--Strassen algorithm, each multiplication in $\FF_p$ costs at most $O(\lg p \lg \lg p \lg \lg \lg p)$ bit operations.
As $p = a \cdot 2^m + 1$ we have $\lg p = m + O(\lg a)$, so \eqref{eq:a-bound} implies that $\lg p = m + O(\lg m) = O(m)$.
Thus the cost of each multiplication in $\FF_p$ is $O(m (\lg m)^2)$ bit operations, and the total cost of the twiddle factor multiplications is $O(m L \lg L)$ bit operations.
This bound also covers the cost of the length~$2$ transforms (`butterflies'), each of which requires one addition and one subtraction in $\FF_p$.

In the Turing model, we must also account for the cost of rearranging data so that the inputs for each layer of short DFTs are stored sequentially on tape.
Using a fast matrix transpose algorithm, the cost per layer is $O(L \lg p \lg S) = O(Lm (\lg m)^2)$ bit operations (see \cite[\S2.3]{HvdHL-mul} for further details), so $O(m L \lg L)$ bit operations altogether.

Let $\Tshort(m, L)$ denote the number of bit operations required to perform $L/S$ transforms of length $S$ with respect to $\omega$, i.e., the cost of one layer of short transforms.
Since the number of layers of short transforms is $d \leq \lg L / (\lg m)^2$, the above discussion shows that
\begin{equation}
\label{eq:Tpoly-bound1}
 \Tpoly(m, L) < \frac{\lg L}{(\lg m)^2} \Tshort(m, L) + O(m L \lg L).
\end{equation}

\medskip
\step{reduce to short convolutions.}
In this step we use Bluestein's algorithm~\cite{Blu-dft} to convert the short transforms into convolution problems.
Suppose that at some layer of the main DFT we are given as input the short polynomials
 \[ a_t(X) = \sum_{i=0}^{S-1} a_{t,i} X^i \in \FF_p[X]/(X^S - 1), \qquad t = 1, \ldots, L/S. \]
We wish to compute $\hat a_1, \ldots, \hat a_{L/S}$, the DFTs with respect to $\omega$.

Let $\eta := \zeta^{L/2S}$ so that $\eta^2 = \omega$.
For $i = 0, \ldots, S-1$ and $t = 1, \ldots, L/S$, define
 \[ f_{t,i} := \eta^{i^2} a_{t,i}, \qquad g_i := \eta^{-i^2}, \]
and put
 \[ f_t(X) := \sum_{i=0}^{S-1} f_{t,i} X^i, \qquad g(X) := \sum_{i=0}^{S-1} g_i X^i, \]
regarded as polynomials in $\FF_p[X]/(X^S - 1)$.
We may compute all of the $g_i$, and compute all of the $f_{t,i}$ from the $a_{t,i}$, using $O((L/S) S) = O(L)$ operations in $\FF_p$.
Then one finds (see for example \cite[\S2.5]{HvdHL-mul}) that $(\hat a_t)_i = \eta^{i^2} h_{t,i}$, where
\begin{equation}
\label{eq:ht}
 h_t := f_t g = \sum_{i=0}^{S-1} h_{t,i} X^i \in \FF_p[X]/(X^S - 1).
\end{equation}
In other words, computing the short DFTs reduces to computing the products $f_1 g, \ldots, f_{L/S} g$ in $\FF_p[X]/(X^S - 1)$, plus an additional $O(L)$ operations in $\FF_p$.

Let $\Cshort(m, L)$ denote the cost of computing the products $f_1 g, \ldots, f_{L/S} g$ in $\FF_p[X]/(X^S - 1)$.
As noted in step (i), each multiplication in $\FF_p$ costs $O(m (\lg m)^2)$ bit operations, so the above discussion shows that
 \[ \Tshort(m, L) < \Cshort(m, L) + O(L m (\lg m)^2). \]
Substituting into \eqref{eq:Tpoly-bound1} yields
\begin{equation}
\label{eq:Tpoly-bound2}
 \Tpoly(m, L) < \frac{\lg L}{(\lg m)^2} \Cshort(m, L) + O(m L \lg L).
\end{equation}

\medskip
\step{reduce to bivariate product over $\ZZ$.}
In this step we transport the problem of computing the products $f_1 g, \ldots, f_{L/S} g$ in $\FF_p[X]/(X^S - 1)$ to the ring
 \[ \Rring := \ZZ[X,Y]/(X^S - 1, Y^k + a), \]
by cutting up each coefficient in $\FF_p$ into $k$ chunks of bit size
 \[ r := m/k = (\lg k)^3. \]
We note for future reference the estimate
\begin{equation}
\label{eq:r-bound}
 r = \left(1 + \frac{O(1)}{\lg \lg m} \right) (\lg m)^3;
\end{equation}
this follows from $m = k (\lg k)^3$, because
 \[ \lg m = \lg k + O(\lg \lg k) = \left(1 + \frac{O(\lg \lg k)}{\lg k}\right) \lg k = \left(1 + \frac{O(1)}{\lg \lg m} \right) \lg k. \]

Interpreting each $f_{t,i}$ and $g_i$ as an integer in the interval $[0, p)$, and decomposing them in base $2^r$, we write
\begin{equation}
\label{eq:ftij-defn}
 f_{t,i} = \sum_{j=0}^{k-1} f_{t,i,j} 2^{(k-1-j)r}, \qquad g_i = \sum_{j=0}^{k-1} g_{i,j} 2^{(k-1-j)r},
\end{equation}
where $f_{t,i,j}$ and $g_{i,j}$ are integers in the interval
\begin{equation}
\label{eq:fg-bound}
 0 \leq f_{t,i,j}, g_{i,j} \leq 2^r a.
\end{equation}
(In fact, they are less than $2^r$ for $j = 1, \ldots, k-1$; the bound $2^r a$ is only needed for the first term $j = 0$.)
Then define polynomials
 \[ F_t := \sum_{i=0}^{S-1} \sum_{j=0}^{k-1} f_{t,i,j} X^i Y^j, \qquad G := \sum_{i=0}^{S-1} \sum_{j=0}^{k-1} g_{i,j} X^i Y^j, \]
regarded as elements of $\Rring = \ZZ[X,Y]/(X^S - 1, Y^k + a)$, and let
 \[ H_t := F_t G = \sum_{i=0}^{S-1} \sum_{j=0}^{k-1} h_{t,i,j} X^i Y^j \]
be the corresponding products in $\Rring$ for $t = 1, \ldots, L/S$.

We claim that knowledge of $H_t$ determines $h_t$; specifically, that
\begin{equation}
\label{eq:H-overlap}
 h_{t,i} = \sum_{j=0}^{k-1} h_{t,i,j} 2^{(2k-2-j)r} \pmod p
\end{equation}
for each pair $(t,i)$.
To prove this, observe that by definition of multiplication in $\Rring$,
\begin{equation}
\label{eq:htij-formula}
 h_{t,i,j} = \sum_{i_1 + i_2 = i \bmod S} \Bigg(\sum_{\substack{j_1+j_2=j\bmod k \\j_1+j_2 < k}} f_{t,i_1,j_1} g_{i_2,j_2} - \sum_{\substack{j_1+j_2=j\bmod k \\j_1+j_2 \geq k}} a f_{t,i_1,j_1} g_{i_2,j_2} \Bigg).
\end{equation}
On the other hand, from \eqref{eq:ht} and \eqref{eq:ftij-defn} we have
\begin{align*}
 h_{t,i} & = \sum_{i_1+i_2=i \bmod S}  f_{t,i_1} g_{i_2} \\
         & = 2^{(2k-2)r} \sum_{i_1+i_2=i \bmod S}\  \sum_{j_1=0}^{k-1} \sum_{j_2=0}^{k-1} f_{t,i_1,j_1} g_{i_2,j_2} 2^{-(j_1+j_2)r},
\end{align*} 
and since $2^{-kr} = 2^{-m} = -a \pmod p$, we obtain
\begin{multline*}
   h_{t,i} = 2^{(2k-2)r} \sum_{i_1+i_2=i \bmod S}  \sum_{j=0}^{k-1} \\
      \phantom{==} \Bigg(\  \sum_{\substack{j_1+j_2=j\bmod k \\j_1+j_2 < k}} f_{t,i_1,j_1} g_{i_2,j_2} - \sum_{\substack{j_1+j_2=j\bmod k \\j_1+j_2 \geq k}} a f_{t,i_1,j_1} g_{i_2,j_2} \Bigg) 2^{-jr} \pmod p.
\end{multline*}
Comparing this expression with \eqref{eq:htij-formula} yields \eqref{eq:H-overlap}.
(The reason this works is that $f_{t,i}$ and $g_i$ are the images of $F_{t,i}$ and $G_i$ under the ring homomorphism from $\Rring$ to $\FF_p[X]/(X^S - 1)$ that sends $Y$ to $2^{-r}$, modulo a scaling factor of $2^{(k-1)r}$.)

Let us estimate the cost of using \eqref{eq:H-overlap} to compute $h_{t,i}$ for a single pair $(t,i)$, assuming that the $h_{t,i,j}$ are known.
By \eqref{eq:htij-formula} and \eqref{eq:fg-bound} we have
 \[ |h_{t,i,j}| \leq (Sk) a (2^r a)^2 = 2^{2r} S k a^3. \]
Then, using \eqref{eq:a-bound} and the fact that $k \leq m$, we obtain $|h_{t,i,j}| \leq 2^{2r} S C^3 m^7$, so
 \[ \lg |h_{t,i,j}| < 2r + (\lg m)^2 + 7 \lg m + O(1). \]
Moreover, as $m = k(\lg k)^3$, for large $m$ we certainly have $\lg k \leq \lg m - 1$, so
 \[ r = (\lg k)^3 \leq (\lg m - 1)^3 \leq (\lg m)^3 - 2(\lg m)^2, \]
and we deduce that
\begin{equation}
\label{eq:lg-htij-bound}
 \lg |h_{t,i,j}| < 2(\lg m)^3.
\end{equation}
In particular, by \eqref{eq:r-bound}, we see that $h_{t,i,j}$ has bit size $O((\lg m)^3) = O(r)$.
Now, to compute $h_{t,i}$, we first evaluate the sum in \eqref{eq:H-overlap} (in $\ZZ$) using a straightforward overlap-add procedure; this costs $O(kr) = O(m)$ bit operations.
Then we reduce the result modulo $p$; this costs a further $O(m (\lg m)^2)$ bit operations (using the Sch\"onhage--Strassen algorithm, as in step (i)).
The total cost over all pairs $(t, i)$ is $O((L/S)S m (\lg m)^2) = O(L m (\lg m)^2)$ bit operations.

Let $\Cbiv(m, L)$ denote the cost of computing the products $F_1 G, \ldots, F_{L/S} G$ in $\Rring$.
The above discussion shows that
 \[ \Cshort(m, L) < \Cbiv(m, L) + O(L m (\lg m)^2). \]
Substituting into \eqref{eq:Tpoly-bound2} yields
\begin{equation}
\label{eq:Tpoly-bound3}
  \Tpoly(m, L) < \frac{\lg L}{(\lg m)^2} \Cbiv(m, L) + O(m L \lg L).
\end{equation}

\medskip
\step{reduce to bivariate multiplication over $\FF_{p'}$.}
In this step we transfer the above multiplication problems from $\Rring = \ZZ[X,Y]/(X^S - 1, Y^k + a)$ to the ring
 \[ \Sring := \FF_{p'}[X,Y]/(X^S - 1, Y^k + a), \]
where $p' := p_0(m') = a' \cdot 2^{m'} + 1$ for a suitable choice of admissible size $m'$.
The idea is to choose $m'$ to be just large enough that computing the desired products modulo~$p'$ determines their coefficients unambiguously in $\ZZ$.

To achieve this, we will take $m' := k' (\lg k')^3$ where
 \[ k' := \left \lceil \frac{\beta}{(\lg \beta - 3 \lg \lg \beta)^3} \right \rceil, \qquad  \beta := 2(\lg m)^3. \]
Note that $m'$ is admissible (we may ensure that $m' > 2^{17}$ by taking $m$ sufficiently large).
We claim that with this choice of $m'$ we have
\begin{equation}
\label{eq:mprime-interval}
 \beta \leq m' < \left(1 + \frac{O(1)}{\lg \lg m}\right) \beta,
\end{equation}
and consequently, taking \eqref{eq:r-bound} into account,
\begin{equation}
\label{eq:mprime-estimate}
 m' = \left(2 + \frac{O(1)}{\lg \lg m} \right) r.
\end{equation}
To establish the first inequality in \eqref{eq:mprime-interval}, observe that since $k' \geq \beta / (\lg \beta)^3$, we have
 \[ \log_2 k' \geq \log_2 \beta - 3 \log_2 \lg \beta \geq \log_2 \beta - 3 \lg \lg \beta. \]
Hence $\lg k' \geq \lg \beta - 3 \lg \lg \beta$, which implies that $m' = k' (\lg k')^3 \geq \beta$.  
For the second inequality, since $k' \leq \beta$ we have
\begin{align*}
  m' = k'(\lg k')^3 & \leq \left( \frac{\beta}{(\lg \beta - 3\lg \lg \beta)^3} + 1\right) (\lg \beta)^3 \\
                    & = \left(\frac{(\lg \beta)^3}{(\lg \beta - 3 \lg \lg \beta)^3} + \frac{(\lg \beta)^3}{\beta} \right) \beta \\
                    & = \left(1 + \frac{O((\lg \lg \beta)^3)}{(\lg \beta)^3} \right) \beta = \left(1 + \frac{O(1)}{\lg \lg m}\right) \beta.
\end{align*}

Let us estimate the cost of computing $p' = a' \cdot 2^{m'} + 1$.
Hypothesis P ensures that $a' < C(m')^2$, so we may locate $p'$ by testing each candidate $a' = 1, \ldots, C(m')^2$ using a naive primality test (trial division) in $2^{O(m')}$ bit operations.
By~\eqref{eq:lgL-interval} and \eqref{eq:mprime-interval} this amounts to
\begin{equation}
\label{eq:find-prime-bound}
 2^{O(m')} = 2^{O(\beta)} = 2^{O((\lg m)^3)} = 2^{O((\lg L)^{3/4})} = O(L)
\end{equation}
bit operations.

Now let $F_1, \ldots, F_{L/S}, G \in \Rring$ be as in step (iii), and let $u_1, \ldots, u_{L/S}, v$ be their images in $\Sring$; that is,
 \[ u_t := \sum_{i=0}^{S-1} \sum_{j=0}^{k-1} u_{t,i,j} X^i Y^j, \qquad v := \sum_{i=0}^{S-1} \sum_{j=0}^{k-1} v_{i,j} X^i Y^j, \]
where $u_{t,i,j}$ and $v_{i,j}$ are the images in $\FF_{p'}$ of $f_{t,i,j}$ and $g_{i,j}$.
Computing these images amounts to zero-padding each coefficient up to $\lg p'$ bits; by \eqref{eq:mprime-estimate} we have $\lg p' = O(m') = O(r)$, so the total cost of this step is $O((L/S)Skr) = O(Lm)$ bit operations.

Let $w_t := u_t v$ for each $t$.
Clearly $w_t$ is the image in $\Sring$ of $H_t = F_t G$.
By \eqref{eq:lg-htij-bound} and \eqref{eq:mprime-interval}, the coefficients $h_{t,i,j}$ of $H_t$ are completely determined by those of $w_t$, as $|h_{t,i,j}| \leq 2^{\beta-1}$ and $p' \geq 2^{m'} + 1 \geq 2^{\beta} + 1$.
Moreover, this lifting can be carried out in linear time, so the cost of deducing $H_t$ from $w_t$ (for all $t$) is again $O(Lm)$ bit operations.

Let $\Ctiny(m, L)$ denote the cost of computing the products $u_1 v, \ldots, u_{L/S} v$ in $\Sring$.
The above discussion shows that
 \[ \Cbiv(m, L) < \Ctiny(m, L) + O(L m). \]
Substituting into \eqref{eq:Tpoly-bound3} yields
\begin{equation}
\label{eq:Tpoly-bound4}
  \Tpoly(m, L) < \frac{\lg L}{(\lg m)^2} \Ctiny(m, L) + O(m L \lg L).
\end{equation}

\step{reduce to DFTs over $\FF_{p'}$.}
Since $2^{m'} \divides p' - 1$ and $\lg S = (\lg m)^2 < m'$, there exists a primitive $S$-th root of unity $\zeta' \in \FF_{p'}$.
We may find one such primitive root by a brute force search in $2^{O(m')} = O(L)$ bit operations (see \eqref{eq:find-prime-bound}).

We will compute the products $w_t = u_t v$ in $\Sring$ by first performing DFTs with respect to~$X$, and then multiplying pointwise in $\FF_{p'}[Y]/(Y^k + a)$.
More precisely, for each $j = 0, \ldots, k-1$ let
 \[ U_{t,j} := \sum_{i=0}^{S-1} u_{t,i,j} X^i, \qquad V_j := \sum_{i=0}^{S-1} v_{i,j} X^i, \]
regarded as polynomials in $\FF_{p'}[X]/(X^S - 1)$.
We call $\textsc{Transform}$ recursively to compute the transforms of each $U_{t,j}$ and $V_j$ with respect to $\zeta'$, i.e., to compute the polynomials
 \[ u_t((\zeta')^i, Y) = \sum_{j=0}^{k-1} U_{t,j}((\zeta')^i) Y^j, \qquad v((\zeta')^i, Y) = \sum_{j=0}^{k-1} V_j((\zeta')^i) Y^j, \]
as elements of $\FF_{p'}[Y]/(Y^k + a)$, for each $i = 0, \ldots, S-1$ and $t = 1, \ldots, L/S$.
The precondition corresponding to~\eqref{eq:lgL-interval} for these recursive calls is
 \[ (\lg m')^4 < \lg S < m'. \]
This is certainly satisfied for large $m$, as $\lg S = (\lg m)^2$ and $m' \sim 2(\lg m)^3$ by \eqref{eq:mprime-interval}.
There are $(L/S + 1)k$ transforms, so their total cost is $(L/S + 1)k \Tpoly(m', S)$ bit operations.

We next compute the pointwise products
 \[ w_t((\zeta')^i, Y) = u_t((\zeta')^i, Y) \cdot v((\zeta')^i, Y) \]
in $\FF_{p'}[Y]/(Y^k + a)$ for each $i$ and $t$.
Using the Sch\"onhage--Strassen algorithm (both the integer variant and the polynomial variant \cite{CK-fastmult}), the cost of each product is $O((k \lg k \lg \lg k) (\lg p' \lg \lg p' \lg \lg \lg p'))$ bit operations.
Using the bounds $\lg p' = O(m')$, $km' = O(kr) = O(m)$, and $m' = O((\lg m)^3)$, this amounts to
\begin{align*}
  O((k \lg m \lg \lg m) (m' \lg m' \lg \lg m')) & = O(m \lg m (\lg \lg m)^2 \lg \lg \lg m) \\
                                                & = O(m (\lg m)^2)
\end{align*}
bit operations, or $O(L m (\lg m)^2)$ bit operations over all $t$ and $i$.

Finally, we perform inverse DFTs with respect to $X$ to recover $w_1, \ldots, w_{L/S}$.
It is well known that these inverse DFTs may be computed by the same algorithm as the forward DFT, with $\zeta'$ replaced by $(\zeta')^{-1}$, followed by a division by $S$.
The divisions cost $O(L m (\lg m)^2)$ bit operations altogether (they are no more expensive than the pointwise multiplications), so the cost of this step is $(L/S) k \Tpoly(m', s) + O(L m (\lg m)^2)$.

The procedure just described requires some data rearrangement, so that the DFTs and pointwise multiplication steps can access the necessary data sequentially.
Using a fast matrix transpose algorithm, this costs $O(L k \lg p' \lg k) = O(L m \lg m)$ bit operations altogether.

Combining all the contributions mentioned above, we obtain
 \[ \Ctiny(m, L) < \left(\frac{2L}{S} + 1 \right) k \Tpoly(m', S) + O(L m (\lg m)^2). \]
Substituting into \eqref{eq:Tpoly-bound4} yields
 \[ \Tpoly(m, L) < \frac{\lg L}{(\lg m)^2} \left( \frac{2L}{S} + 1 \right) k \Tpoly(m', S) + O(m L \lg L). \]

{\sl Conclusion.}
This concludes the description of the algorithm; it remains to establish \eqref{eq:main-bound}.
Dividing the previous inequality by $m L \lg L$, and recalling that $S = 2^{(\lg m)^2}$, we obtain
 \[ \frac{\Tpoly(m, L)}{m L \lg L} < \left(2 + \frac{S}{L}\right) \frac{km'}{m} \cdot \frac{\Tpoly(m', S)}{m' S \lg S} + O(1). \]
By definition of $\Tpoly(m')$ we have $\Tpoly(m', S)/(m' S \lg S) \leq \Tpoly(m')$.
Equation \eqref{eq:mprime-estimate} implies that
 \[ km' = \left(2 + \frac{O(1)}{\lg \lg m}\right) kr = \left(2 + \frac{O(1)}{\lg \lg m}\right) m, \]
and \eqref{eq:lgL-interval} yields $S/L < 2^{(\lg m)^2 - (\lg m)^4} = O(1 / \lg \lg m)$.
Therefore
 \[ \frac{\Tpoly(m, L)}{m L \lg L} < \left(4 + \frac{O(1)}{\lg \lg m}\right) \Tpoly(m') + O(1). \]
Taking the maximum over all $\lg L$ yields the desired bound \eqref{eq:main-bound}.
\end{proof}
\begin{cor}
\label{cor:bound}
We have $\Tpoly(m) = O(4^{\log^* m})$ for admissible $m \to \infty$.
\end{cor}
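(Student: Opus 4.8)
The plan is to unroll the recursion of Proposition~\ref{prop:main-bound}. Given an admissible $m > m_0$, repeatedly applying the proposition produces a chain of admissible sizes $m = n_0 > n_1 > \cdots > n_j$ with $n_{i+1} < (\log n_i)^4$ and $n_i > m_0$ for $i < j$, halting once $n_j \le m_0$; this chain is finite because $(\log_2 x)^4 < x$ for all $x$ past an absolute constant, so the $n_i$ strictly decrease. Writing $\delta_i := C_1/\lg\lg n_i$ and iterating \eqref{eq:main-bound} yields
\[ \Tpoly(m) < \Big(\prod_{i=0}^{j-1}(4+\delta_i)\Big)\Tpoly(n_j) + C_2\sum_{i=0}^{j-1}\prod_{l=0}^{i-1}(4+\delta_l). \]
Since every factor exceeds $4$, the second sum is at most $\tfrac13\prod_{i=0}^{j-1}(4+\delta_i)$, and there being only finitely many admissible sizes at most $m_0$ we may take $\Tpoly(n_j) \le B$ for an absolute constant $B$. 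Thus $\Tpoly(m) = O\big(\prod_{i=0}^{j-1}(4+\delta_i)\big)$, and everything comes down to bounding this product by $O(4^{\log^* m})$.

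Write $\prod_{i=0}^{j-1}(4+\delta_i) = 4^j\prod_{i=0}^{j-1}(1+\delta_i/4) \le 4^j\exp\big(\tfrac{C_1}{4}\sum_{i=0}^{j-1}1/\lg\lg n_i\big)$. The first ingredient is the depth bound $j \le \log^* m + O(1)$. From $n_{i+1} < (\log n_i)^4$ one gets $\lg\lg n_{i+1} \le \lg\lg\lg n_i + O(1)$, so the quantities $\gamma_i := \lg\lg n_i$ obey $\gamma_{i+1} \le \lg\gamma_i + O(1)$; iterating from $\gamma_0 = \lg\lg m$ drives $\gamma_i$ below $\lg\lg m_0$ within $\log^*(\lg\lg m) + O(1) = \log^* m + O(1)$ steps, provided $m_0$ is taken large enough to absorb the additive constant (we are free to enlarge $m_0$). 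Once $\gamma_i$ is bounded, so is $n_i \le 2^{2^{\gamma_i}}$, forcing $n_i \le m_0$; hence $4^j = O(4^{\log^* m})$.

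The second and more delicate ingredient is that $\sum_{i=0}^{j-1}1/\lg\lg n_i$ is bounded by an \emph{absolute} constant. Merely bounding each of the (up to) $\log^* m$ terms by $1/\lg\lg m_0$ is far too wasteful --- that would only deliver $O\big((4+\eps)^{\log^* m}\big)$ --- so the key observation is that all but boundedly many of the $\gamma_i$ are astronomically large. Reading $\gamma_{i+1} \le \lg\gamma_i + O(1)$ backwards from $i = j-1$ shows that $\gamma_{j-1-l} \ge T_l$, where $T_0 := \lg\lg m_0$ and $T_{l+1} := 2^{T_l - O(1)}$; for $m_0$ large this $T_l$ grows like an iterated exponential in $l$, so $\sum_{i=0}^{j-1}1/\lg\lg n_i = \sum_{l\ge0}1/\gamma_{j-1-l} \le \sum_{l\ge0}1/T_l$ converges to an absolute constant. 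Combining the two ingredients, $\prod_{i=0}^{j-1}(4+\delta_i) = O(4^j) = O(4^{\log^* m})$, and therefore $\Tpoly(m) = O(4^{\log^* m})$.

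The crux --- and the step I expect to be the real obstacle --- is precisely this control of the accumulated constant $\prod_i(4 + C_1/\lg\lg n_i)$: keeping the base at $4$ rather than $4+\eps$ needs both that the recursion depth is $\log^* m + O(1)$ with leading coefficient exactly one, and that the correction factors $1 + C_1/(4\lg\lg n_i)$ multiply to $O(1)$. The depth statement is slightly subtle, since the loose bound $n_{i+1} < (\log n_i)^4$ has an attracting fixed point near $x = 2^{16}$ where iteration would be slow; this causes no trouble only because the recursion terminates at $m_0 > 2^{16}$ and never drifts near the fixed point. Both subtleties evaporate once one works with $\lg\lg n_i$ rather than $n_i$: this quantity contracts at essentially the iterated-logarithm rate, which simultaneously caps the depth and makes the sum of reciprocals converge.
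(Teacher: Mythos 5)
Your argument is correct and reaches the same bound, but via a genuinely different route from the paper. You unroll the recursion into a chain $n_0 > \cdots > n_j$ and control the accumulated factor $\prod_i(4 + C_1/\lg\lg n_i)$ by two separate facts: the depth bound $j \leq \log^* m + O(1)$, and the absolute convergence of $\sum_i 1/\lg\lg n_i$ (obtained by reading $\gamma_{i+1} \leq \lg\gamma_i + O(1)$ backwards, so that $\gamma_{j-1-l}$ grows iterated-exponentially in $l$). The paper instead runs a single tight induction on $\log^*(m^{1/8})$: after enlarging $m_0$ so that $(\log m)^{1/2} < \log(m^{1/8})$ and $C_1/\lg\lg m < 4^{-\log^*(m^{1/8})}$ for all $m > m_0$, and putting $B := \max(C_2, \max_{m\leq m_0}\Tpoly(m))$, it proves $\Tpoly(m) < (4^{\log^*(m^{1/8})+1}-1)B$ directly by induction. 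The point of the exponent $1/8$ is precisely to normalise away the additive slop you have to argue about: it turns $m' < (\log m)^4$ into the strict contraction $(m')^{1/8} < \log(m^{1/8})$, so $\log^*$ decrements by exactly one per step with no separate depth lemma; and the hypothesis $C_1/\lg\lg m < 4^{-\log^*(m^{1/8})}$ encapsulates the same iterated-exponential decay you verify via the backwards iteration. Your version makes the mechanism more visible (both why the depth is $\log^*$ and why the $+\delta_i$ corrections never pile up are laid bare), at the cost of having to prove those two subsidiary estimates rather than absorb them into the induction hypothesis. The one thin spot is the depth bound: the recurrence $\gamma_{i+1} \leq \lg\gamma_i + O(1)$ does not literally decrement $\log^*\gamma_i$ step by step, and ``enlarging $m_0$ to absorb the additive constant'' gestures in the right direction but is not itself a proof. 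A rigorous version needs a normalisation --- for instance, once $\gamma_i$ exceeds an absolute constant one has $\gamma_{i+1}^{1/2} \leq \lg(\gamma_i^{1/2})$, whence $\log^*(\gamma_i^{1/2})$ drops by one each step --- which is exactly the role played by $m^{1/8}$ in the paper's proof.
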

The corollary could be deduced from Proposition \ref{prop:main-bound} by using \cite[Prop.~8]{HvdHL-mul}.
We give a simpler (but less general) argument here.
\begin{proof}
Let $m_0$, $C_1$ and $C_2$ be as in Proposition \ref{prop:main-bound}.
We may assume, increasing $m_0$ if necessary, that
 \[ (\log m)^{1/2} < \log(m^{1/8}) \qquad \text{and} \qquad \frac{C_1}{\lg \lg m} < 4^{-\log^*(m^{1/8})} \]
for all $m > m_0$.
Define
 \[ B := \max\big(C_2, \max_{\substack{\text{$m$ admissible} \\ m \leq m_0}} \Tpoly(m)\big). \]
We will prove that
\begin{equation}
\label{eq:Tpoly-bound}
 \Tpoly(m) < (4^{\log^*(m^{1/8}) + 1} - 1)B
\end{equation}
for all admissible $m$.

If $m \leq m_0$ then \eqref{eq:Tpoly-bound} holds by the definition of $B$, so we may assume that $m > m_0$.
By Proposition \ref{prop:main-bound}, there exists an admissible $m' < (\log m)^4$ such that
\begin{align*}
 \Tpoly(m) & < \left(4 + \frac{C_1}{\lg \lg m}\right) \Tpoly(m') + C_2 \\ 
           & < (4 + 4^{-\log^*(m^{1/8})}) \Tpoly(m') + B.
\end{align*}
Since $(m')^{1/8} < (\log m)^{1/2} < \log(m^{1/8})$, we have $\log^*((m')^{1/8}) \leq \log^*(m^{1/8}) - 1$.
By induction on $\log^*(m^{1/8})$, we obtain
\begin{align*}
 \Tpoly(m) & < (4 + 4^{-\log^*(m^{1/8})}) (4^{\log^*(m^{1/8})} - 1)B + B \\
           & = (4^{\log^*(m^{1/8}) + 1} - 4^{-\log^*(m^{1/8})} - 2) B \\
           & < (4^{\log^*(m^{1/8}) + 1} - 1) B.
\end{align*}
This establishes \eqref{eq:Tpoly-bound}, and the corollary follows immediately.
\end{proof}

\subsection{Application to integer multiplication}

We are now in a position to prove the main result.

\begin{proof}[Proof of Theorem \ref{thm:main}]
We are given as input two positive integers $u, v < 2^n$ for some large $n$; our goal is to compute $uv$.

Define
 \[ k := \left\lceil \frac{(5/2) \lg n}{(\lg \lg n)^3} \right\rceil, \qquad m := k(\lg k)^3. \]
We have
 \[ \lg k = \lg \lg n + O(\lg \lg \lg n) = \left(1 + o(1) \right) \lg \lg n, \]
so
 \[ 2 \lg n < m < 3 \lg n \]
for large $n$.
We may assume that $n$ is large enough so that $m > 2^{17}$; then $m$ is admissible.

Let $b := \lfloor m/4 \rfloor$ and $d := \lceil n/b \rceil$.
We decompose $u$ and $v$ in base $2^b$ as
 \[ u = \sum_{i=0}^{d-1} u_i 2^{bi}, \qquad v = \sum_{i=0}^{d-1} v_i 2^{bi}, \qquad 0 \leq u_i, v_i < 2^b, \]
and define polynomials
 \[ U(X) := \sum_{i=0}^{d-1} u_i X^i, \qquad V(X) := \sum_{i=0}^{d-1} v_i X^i \qquad \in \ZZ[X]. \]
Let $W := UV \in \ZZ[X]$.
The coefficients of $W$ have at most $2b + \lg d = O(m)$ bits, so the product $uv = W(2^b)$ may be recovered from $W(X)$ in $O(n)$ bit operations.

Let $p := p_0(m) = a \cdot 2^m + 1$.
We may find $p$ by testing each value of $a$ up to $Cm^2$ using a polynomial-time primality test \cite{AKS-primes}, in $m^{O(1)} = (\lg n)^{O(1)}$ bit operations.
Also put $\ell := \lg(10n/m)$ and $L := 2^\ell$.
The polynomial $W$ is determined by its image in $\FF_p[X]/(X^L - 1)$, as
 \[ d \leq \frac{n}{b} + 1 \leq \frac{n}{m/4 - 1} + 1 < \frac{5n}{m} \leq L/2 \]
and
 \[ 2b + \lg d \leq m/2 + \lg n < m \]
for large $n$.

To compute the product in $\FF_p[X]/(X^L - 1)$, we will use $\textsc{Transform}$ to perform DFTs and inverse DFTs, and multiply pointwise in $\FF_p$.
The precondition \eqref{eq:lgL-interval} certainly holds for large $n$.
According to \cite{Shp-primitive}, we may find a suitable primitive root in $\FF_p$ in
 \[ p^{1/4+o(1)} < (2^{m/4})^{1+o(1)} < (2^{(3/4)\lg n})^{1+o(1)} = n^{3/4 + o(1)} \]
bit operations.
By Corollary \ref{cor:bound}, we conclude that
\begin{align*}
 \Mint(n) & < 3 \Tpoly(m, L) + O(L m \lg m \lg \lg m) \\
          & < 3 m L \lg L \Tpoly(m) + O(L m \lg m \lg \lg m) \\
          & = O(n \lg n \, 4^{\log^* m}) + O(n \lg \lg n \lg \lg \lg n) \\
          & = O(n \lg n \, 4^{\log^* n}). \qedhere
\end{align*}
\end{proof}

\section*{Acknowledgments}

The authors thank an anonymous referee, whose comments helped to greatly improve the presentation of these results, and also Igor Shparlinski and Liangyi Zhao, for helpful discussions about bounds for primes in arithmetic progressions.

\bibliography{vanilla}

\end{document}